\numberwithin{equation}{section} \setlength{\textwidth} {15cm}
\def\beq{\begin{equation}}
\def\eeq{\end{equation}}
\def\bC{ {{\mathbb{C}}}}
\def\bR{ {{\mathbb{R}}}}
\newcommand{\pk}[1]{p_{\kappa}}
\newtheorem{defn}{{\bf Definition}}[section]
\newtheorem{thm}[defn]{{\bf Theorem}}
\newtheorem{cor}[defn]{{\bf Corollary}}
\newtheorem{lem}[defn]{{\bf Lemma}}
\newtheorem{prop}[defn]{{\bf Proposition}}
\newtheorem{rem}[defn]{{\bf Remark}}
\newtheorem{notation}[defn]{Notation}
\newenvironment{proof}[1][Proof]{\textbf{#1.} }{\hfill \rule{0.5em}{0.5em}}
\begin{document}

\title{An unitary representation of inhomogeneous ${\rm SL}(2,\mathbb{C})$ using surfaces in $\mathbb{R}^4$}
\author{Adrian P. C. Lim \\
Email: ppcube@gmail.com
%ORCID: 0000-0002-2633-8734
}

\date{}

\maketitle

\begin{abstract}
We will construct a non-separable Hilbert space for which the inhomogeneous ${\rm SL}(2,\mathbb{C})$ acts on it unitarily. Each vector in this Hilbert space is described by a (rectangular) space-like surface in $\mathbb{R}^4$, for which a frame consisting of a time-like vector and a space-like vector, and a vector field is defined on it. The inner product on this Hilbert space is defined via a surface integral, which is associated with the area of the surface.
\end{abstract}

\hspace{.35cm}{\small {\bf MSC} 2020: }22E43 \\
\indent \hspace{.35cm}{\small {\bf Keywords}: ${\rm SL}(2,\bC)$, Lorentz transformation, unitary representation}

%{\bf \today}

%%%%%%%%%%%%%%%%%%%%%%%%%%%%%%%%%%%%%%%
%%%%%%%%%%%%%%%%%%%%%%%%%%%%%%%%%
%%%%%%%%%%%%%%%%%%%%%%%%%%%%%%%%%

%\tableofcontents

\section{Preliminaries}\label{s.pre}

Consider the 4-dimensional Euclidean space $\bR \times \bR^3\equiv \bR^4$. Note that $\bR$ will be referred to as the time-axis and $\bR^3$ is the spatial 3-dimensional Euclidean space. Fix the standard coordinates on $\bR^4\equiv \bR \times \bR^3$, $\vec{x} = (x^0, x^1, x^2, x^3)$, with time coordinate $x^0$ and spatial coordinates $x=(x^1, x^2, x^3)$. On $\bR^4$, we endow it with the standard Riemannian metric.

Throughout this article, we adopt Einstein's summation convention, i.e. we sum over repeated superscripts and subscripts.

We set the speed of light $c = 1$. On $\bR^4$, we can define the Minkowski metric, given by \beq \vec{x}\cdot \vec{y} := -x^0y^0 + \sum_{i=1}^3 x^iy^i. \label{e.inn.4} \eeq Note that our Minkowski metric is negative of the one used by physicists.

\section{Inhomogeneous ${\rm SL}(2,\bC)$}

A Lorentz transformation $\Lambda$ is a linear transformation mapping space-time $\bR^4$ onto itself, which preserves the Minkowski metric given in Equation (\ref{e.inn.4}). Indeed, the Lorentz transformations form a group, referred to as Lorentz group $L$. It has 4 components, and we will call the component containing the identity, as the restricted Lorentz group, denoted $L_+^\uparrow$.

Given any continuous group acting on $\bR^4$, we can consider its corresponding inhomogeneous group, whose elements are pairs consisting of a translation and a homogeneous transformation. For example, the Poincare group $\mathscr{P}$ containing the Lorentz group $L$, will have elements $\{\vec{a}, \Lambda\}$, where $\Lambda \in L$ and $\vec{a}$ will represent translation in the direction $\vec{a}$. The multiplication law for the Poincare group is given by \beq \{\vec{a}_1, A_1\}\{\vec{a}_2, A_2\} = \{\vec{a}_1 + A_1\vec{a}_2, A_1A_2\}. \nonumber \eeq

Associated with the restricted Lorentz group $L_+^\uparrow$ is the group of $2 \times 2$ complex matrices of determinant one, denoted by ${\rm SL}(2, \bC)$. There is an onto homomorphism $Y: {\rm SL}(2,\bC) \rightarrow L_+^\uparrow$. Thus, given $\Lambda \in {\rm SL}(2,\bC)$, $Y(\Lambda) \in L_+^\uparrow \subset L$. A formula for $Y$ can be found in \cite{streater}.

Instead of the Poincare group, we can consider the inhomogeneous ${\rm SL}(2,\bC)$ in its place, which we will denote also by ${\rm SL}(2, \bC)$, and use it to construct unitary representations of the Poincare group. Its elements will consist of $\{\vec{a}, \Lambda\}$ and its multiplication law is given by \beq \{\vec{a}_1, \Lambda_1\}\{\vec{a}_2, \Lambda_2 \} = \{\vec{a}_1+Y(\Lambda_1)\vec{a}_2, \Lambda_1 \Lambda_2\}. \nonumber \eeq By abuse of notation, for any $\Lambda \in {\rm SL}(2,\bC)$, we will write $\Lambda \vec{a}_2 \equiv Y(\Lambda)\vec{a}_2$.

Now, every irreducible finite dimensional representation of ${\rm SL}(2, \bC)$ is denoted by $D^{(j/2, k/2)}$, $j, k$ are non-negative integers. This representation is known as the spinor representation of ${\rm SL}(2, \bC)$. Any irreducible representation of ${\rm SU}(2)$ is equivalent to $A \in {\rm SU}(2)\subset {\rm SL}(2, \bC) \mapsto D^{(j/2,0)}(A)$ for some integer $j$. See \cite{streater}.

But what we are interested in is to construct unitary representations of ${\rm SL}(2,\bC)$. (See \cite{glimm1981quantum}.) And the only unitary finite dimensional representation of ${\rm SL}(2, \bC)$ is the trivial representation. See Theorem 16.2 in \cite{knapp2016representation}.

Thus, we need to look at unitary infinite dimensional representations of ${\rm SL}(2, \bC)$. A discussion of the classification of unitary representation of inhomogeneous ${\rm SL}(2, \bC)$ can be found in \cite{10.2307/1968551}. An explicit construction can be found in \cite{MR213473}. This construction used $L^2$ complex valued functions. Another construction is the space of tempered distributions (acting on $L^2(\bR^2)$), which can be found in \cite{ALBEVERIO197439}. An unitary representation of inhomogeneous ${\rm SL}(2, \bC)$ using a spinor representation of positive mass $m$, can be found in \cite{streater}.

We would like to give another construction, using rectangular surfaces in $\bR^4$. Clearly, the area of a surface is invariant under spatial rotation. Suppose we have a rectangular surface in the $x^0-x^1$ plane. If we give it a boost in the $x^1$-direction, because of time-dilation and length contraction, we see that the area of the rectangular surface remains invariant.

However, if we boost in the $x^2$ direction, then the area is no longer invariant. But this motivates us to look at surfaces in $\bR^4$, and we consider an inner product that is associated with the area of a surface. We can generalize this idea as given in the next section, by considering a field over a surface.

An unitary representation of inhomogeneous ${\rm SL}(2, \bC)$ appears in the Wightman's axioms. See \cite{streater}. A construction of a 4-dimensional quantum field theory that satisfies Wightman's axioms, is required to prove the Yang-Mills mass gap problem, as described in \cite{jaffe2006quantum}.

\section{Quantum Hilbert Space}

\begin{notation}
We will let $I = [0,1]$ be the unit interval, and $I^2 \equiv I \times I$. The variables $s, \bar{s}, t, \bar{t}$ will take values in $I$ and $\hat{s} = (s,\bar{s}),\ \hat{t} = (t,\bar{t})\in I^2$. And $d\hat{s} \equiv ds d\bar{s}$, $d\hat{t} \equiv dt d\bar{t}$. Typically, $s, \bar{s}, t, \bar{t}$ will be reserved as the variable for some parametrization, i.e. $\vec{\rho}: s \in I \mapsto \bR^4$.
\end{notation}

To construct a Hilbert space $\mathscr{H}$ for which ${\rm SL}(2,\bC)$ acts on unitarily, we need a finite dimensional real vector space $\mathcal{V}$, of which $\{E^\alpha\}_{\alpha=1}^N$ is its orthonormal basis. We endow a real inner product $\langle \cdot, \cdot \rangle$ on $\mathcal{V}$. Extend this inner product to be a sesquilinear complex inner product, over the complexification of $\mathcal{V}$, denoted as $\mathcal{V}_{\bC} \equiv \mathcal{V} \otimes_{\bR} \bC$. Hence, it is linear in the first variable, conjugate linear in the second.

\begin{defn}\label{d.ts.1}(Time-like and space-like surfaces)\\
Let $S$ be a rectangular surface in $\bR^4$, contained in some plane. By rotating the spatial axes if necessary, we may assume without any loss of generality, that a parametrization of $S$ is given by $\{(a^0 + sb^0, a^1 , a^2 + sb^2, a^3 +tb^3)^T \in \bR^4:\ s, t \in I\}$, for constants $a^\alpha, b^\alpha \in \bR$. Now, the surface $S$ is spanned by two directional vectors $(b^0, 0, b^2, 0)^T$ and $(0,0, 0, b^3)^T$. Note that $(b^0, 0, b^2, 0)^T$ lie in the $x^0-x^2$ plane and is orthogonal to $(0,0, 0, b^3)^T$.

We say a rectangular surface $S$ is space-like, if $|b^0| < |b^2|$, i.e. the acute angle which the vector $(b^0, 0, b^2, 0)^T$ makes with the $x^2$-axis in the $x^0-x^2$-plane is less than $\pi/4$.

We say a rectangular surface $S$ is time-like, if $|b^0| > |b^2|$, i.e. the acute angle which the vector $(b^0, 0, b^2, 0)^T$ makes with the $x^0$-axis in the $x^0-x^2$-plane is less than $\pi/4$.
\end{defn}

\begin{rem}\label{r.s.1}
%Let $e_1 = (1,0,0,0)$ be a directional vector in increasing time direction.
Let $S$ be a rectangular surface in $\bR^4$ contained in some plane, and $TS$ denote the set of directional vectors that lie inside $S$.

Write $\vec{v} = (v^0, v) \equiv (v^0, v^1, v^2, v^3) \in TS$, and define $|v|^2 = (v^1)^2 + (v^2)^2 + (v^3)^2$. An equivalent way to say that $S$ is time-like is \beq \inf_{\vec{0} \neq \vec{v} \in TS} \frac{|v|^2}{(v^0)^2} < 1. \label{e.inf.1} \eeq

And we say that $S$ is space-like if \beq \inf_{\vec{0} \neq \vec{v} \in TS} \frac{|v|^2}{(v^0)^2} > 1. \label{e.inf.2} \eeq
\end{rem}

By definition, a time-like surface must contain a time-like directional vector in it. Since under Lorentz transformation, a time-like vector remains time-like, we see that a time-like surface remains time-like under Lorentz transformation. Similarly, a surface is space-like means all its directional vectors in the surface are space-like. Under Lorentz transformation, all its directional vectors spanning $S$ remain space-like, hence a space-like surface remains space-like under Lorentz transformation.

In the rest of this article, we will only consider surfaces which contain rectangular surfaces contained in some plane, which are space-like.

Suppose we consider a time-like surface contained in some plane, i.e. let $\phi \in \bR$ such that $\coth\phi = |b^0|/|b^2| > 1$. Then we can write $b^0 = \alpha\cosh \phi$, $b^2 = \alpha\sinh \phi$, $\phi \geq 0$. When the surface is space-like, then we will write $b^0 = \alpha\sinh \phi$, $b^2 = \alpha\cosh \phi$, $\alpha$ is a non-zero constant. Hence, we can parametrize a time-like rectangular surface with \beq (s,t) \in I^2 \mapsto
\left(
  \begin{array}{c}
    a^0 + s\alpha\cosh \phi \\
    a^1  \\
    a^2 + s\alpha\sinh \phi \\
    a^3 + t\beta b^3\\
  \end{array}
\right) =
\left(
  \begin{array}{c}
    a^0 \\
    a^1 \\
    a^2 \\
    a^3 \\
  \end{array}
\right) +
\left(
  \begin{array}{cccc}
    \cosh\phi &\ 0 &\ \sinh \phi &\ 0 \\
    0 &\ 1 &\ 0 &\ 0 \\
    \sinh\phi &\ 0 &\ \cosh\phi &\ 0 \\
    0 &\ 0 &\ 0 &\ 1 \\
  \end{array}
\right)
\left(
  \begin{array}{c}
    \alpha s \\
    0 \\
    0 \\
    \beta t b^3\\
  \end{array}
\right); \nonumber \eeq
a space-like rectangular surface with \beq (s,t) \in I^2 \mapsto
\left(
  \begin{array}{c}
    a^0 + s\alpha\sinh \phi \\
    a^1  \\
    a^2 + s\alpha\cosh \phi \\
    a^3 + t\beta b^3\\
  \end{array}
\right) =
\left(
  \begin{array}{c}
    a^0 \\
    a^1 \\
    a^2 \\
    a^3 \\
  \end{array}
\right) +
\left(
  \begin{array}{cccc}
    \cosh\phi &\ 0 &\ \sinh\phi &\ 0 \\
    0 &\ 1 &\ 0 &\ 0 \\
    \sinh\phi &\ 0 &\ \cosh\phi &\ 0 \\
    0 &\ 0 &\ 0 &\ 1 \\
  \end{array}
\right)
\left(
  \begin{array}{c}
    0 \\
    0 \\
     \alpha s \\
    \beta tb^3 \\
  \end{array}
\right), \label{e.a.1} \eeq whereby $\alpha$ and $\beta$ are some fixed non-zero constants.

In this article, when we say surface $S$, we mean it is a finite, disjoint union of space-like rectangular compact surfaces in $\bR^4$, containing none, some or all of its boundary points.

\begin{defn}(Surface)\label{d.sur.1}\\
Any surface $S \equiv \{S_u: u=1,\ldots, n\} \subset \bR^4$, satisfies the following conditions:
\begin{itemize}
  \item each connected component $S_u$ is a rectangular surface in $\bR^4$ contained in some plane, which is space-like;
  \item each connected component $S_u$ may contain none, some or all of its boundary;
  \item $S_u \cap S_v = \emptyset$ if $u \neq v$;
  \item $S$ is contained in some bounded set in $\bR^4$.
\end{itemize}
\end{defn}

Given 2 surfaces, $S$ and $\tilde{S}$, we need to take the intersection and union of these surfaces. Now, the union of these 2 surfaces can always be written as a disjoint union of connected sets, each such set is a surface, containing none, some or all of its boundary points. However, the intersection may not be a surface. For example, the two surfaces may intersect to give a curve. In such a case, we will take the intersection to be the empty set $\emptyset$.

\begin{defn}\label{d.a.1}
Let $S_0$ be a compact rectangular space-like surface inside the $x^2-x^3$ plane. From Equation (\ref{e.a.1}), we see that any rectangular space-like surface $S$ can be transformed to $S_0$ by Lorentz transformations and translation. Let $e_0 = (1, 0, 0, 0)^T$ and $e_1 = (0, 1, 0, 0)^T$ be time-like and space-like vectors respectively.

We say that $\{\hat{f}_0, \hat{f}_1\}$ is a frame for a compact space-like surface $S$ contained in some plane, if there exists a sequence of Lorentz transformations $\Lambda_1, \cdots, \Lambda_n$ and a translation by $\vec{a} \in \bR^4$, such that
\begin{itemize}
  \item $S = \Lambda_n \cdots \Lambda_1 S_0 + \vec{a}$;
  \item $\hat{f}_0 = \Lambda_n \cdots \Lambda_1 e_0 \in \bR^4$; and
  \item $\hat{f}_1 = \Lambda_n \cdots \Lambda_1 e_1 \in \bR^4$.
\end{itemize}
\end{defn}

\begin{rem}
Observe that $\hat{f}_0$ is time-like and $\hat{f}_1$ is space-like, both in $\mathbb{R}^4$, satisfying the following properties:
\begin{itemize}
  \item $\hat{f}_0 \cdot \vec{x} = \hat{f}_1 \cdot \vec{x} = 0$ for any directional vector $\vec{x}$ in $S$;
  \item $\hat{f}_0 \cdot \hat{f}_1 = 0$ and
  \item $\hat{f}_0\cdot \hat{f}_0 = -1$ and $\hat{f}_1 \cdot \hat{f}_1 = 1$.
\end{itemize}
\end{rem}

Given a space-like surface $S = \bigcup_{u=1}^\infty S_u$, each $S_u$ is a compact rectangular space-like surface contained inside some plane, we will write $\{\hat{f}_0^u, \hat{f}_1^u\}_{u\geq 1}$ to denote a collection of frames for $S$, such that each $\{\hat{f}_0^u, \hat{f}_1^u\}$ is a frame for $S_u$.

Consider the trivial bundle $\bR^4 \times \mathcal{V}_{\bC} \rightarrow \bR^4$. The Hilbert space $\mathscr{H}$ will consist of vectors of the form $\sum_{u=1}^\infty(S_u, f_\alpha^u \otimes E^\alpha, \{\hat{f}_0^u, \hat{f}_1^u\})$, whereby $S_u$ is some space-like surface in $\bR^4$ as described above, $f_\alpha^u$ will be some (measurable) complex-valued function, which is defined on the surface $S_u$ and $\{\hat{f}_0^u, \hat{f}_1^u\}$ is a frame for each surface $S_u$ contained in a plane as described in Definition \ref{d.a.1}. We sum over repeated index $\alpha$, from $\alpha=1$ to $N$. Let $\sigma: [0,1] \times [0,1] \rightarrow \bR^4$ be a parametrization of $S$. The complex-valued function $f_\alpha$ is said to be measurable on $S$, if $f_\alpha \circ \sigma: [0,1]^2 \rightarrow \bC$ is measurable.

\begin{rem}
One should think of $f_\alpha^u \otimes E^\alpha$ as a section of the vector bundle $\bR^4 \times \mathcal{V}_{\bC} \rightarrow \bR^4$, defined over the surface $S_u$. Henceforth, it will be referred to as a field over $S_u$. If $\vec{x} \in S_u$, then the field vector at $\vec{x}$ will be given by $f_\alpha^u(\vec{x}) \otimes E^\alpha$.
\end{rem}

Let $S$ and $\tilde{S}$ be rectangular space-like surfaces contained in a plane. Given scalars $\lambda$ and $\mu$, we define the addition and scalar multiplication as
\begin{align}
\lambda\left(S, f_\alpha \otimes E^\alpha, \{\hat{f}_0, \hat{f}_1\}\right)& + \mu\left(\tilde{S}, g_\alpha \otimes E^\alpha, \{\hat{f}_0, \hat{f}_1\}\right) \nonumber \\
&:= \left(S \cup \tilde{S}, (\lambda\tilde{f}_\alpha + \mu\tilde{g}_\alpha) \otimes E^\alpha, \{\hat{f}_0, \hat{f}_1\}\right). \label{e.v.1}
\end{align}
Here, we extend $f_\alpha$ to be $\tilde{f}_\alpha: S \cup \tilde{S} \rightarrow \bR$ by $\tilde{f}_\alpha(p) = f_\alpha(p)$ if $p \in S$; otherwise $\tilde{f}(p) = 0$. Similarly, $\tilde{g}_\alpha$ is an extension of $g_\alpha$, defined as $\tilde{g}_\alpha(p) = g_\alpha(p)$, if $p \in \tilde{S}$, 0 otherwise.

\begin{rem}
For the above addition to hold, we require that the frame $\{\hat{f}_0, \hat{f}_1\}$ on $S$ and $\tilde{S}$ to be the same.
\end{rem}

\begin{defn}
Given a bounded surface $S = \bigcup_{u=1}^n S_u$, each $S_u$ equipped with a frame $\{\hat{f}_0^u, \hat{f}_1^u\}$, and a set of bounded and continuous complex-valued functions $\{f_\alpha^u\}_{\alpha=1}^N$, $u \geq 1$, defined on $S_u$, form a vector $\sum_{u=1}^n(S_u, f_\alpha^u \otimes E^\alpha, \{\hat{f}_0^u, \hat{f}_1^u\})$. Note that for each $u$, $f_\alpha^u \otimes E^\alpha$ is a vector field over $S_u$, with $S_u$ contained in some space-like plane, equipped with a frame $\{\hat{f}_0^u, \hat{f}_1^u\}$. Let $V$ be a (complex) vector space containing such vectors, with addition and scalar multiplication defined by Equation (\ref{e.v.1}).
\end{defn}

\begin{rem}
The zero vector can be written as $(S, 0, \{\hat{f}_0, \hat{f}_1\})$ for any space-like rectangular surface $S$ contained in a plane, equipped with any frame $\{\hat{f}_0, \hat{f}_1\}$.
\end{rem}

We want to make $V$ to be a normed space. We can define the following inner product.

\begin{defn}\label{d.inn.1}
For a surface $S$, define  $\int_S d\rho$ as in Corollary \ref{c.w.4}. Assume that $S$ and $\tilde{S}$ be space-like surfaces, contained in a plane. Let $\sigma: I^2 \rightarrow S \cap \tilde{S}$ be a parametrization. Define an inner product $\langle \cdot, \cdot \rangle$ for $\left(S, f_\alpha \otimes E^\alpha, \{\hat{f}_0, \hat{f}_1\} \right) \in V$, \\ $\left(\tilde{S}, g_\beta \otimes E^\beta, \{\hat{g}_0, \hat{g}_1\}\right) \in V$, given by
\begin{align*}
&\left\langle \left(S, f_\alpha \otimes E^\alpha, \{\hat{f}_0, \hat{f}_1\}\right), \left(\tilde{S}, g_\beta \otimes E^\beta, \{\hat{g}_0, \hat{g}_1\}\right) \right\rangle \\
&\hspace{1cm} :=  \sum_{\alpha,\beta=1}^N \int_{S \cap \tilde{S}}  f_{\alpha}\overline{g_{\beta}}\
d\rho\ \langle E^\alpha , E^\beta \rangle \\
&\hspace{1cm} = \sum_{\alpha,\beta=1}^N \sum_{0 \leq a<b \leq 3}\int_{I^2}  f_{\alpha}(\sigma(\hat{s}))\overline{g_{\beta}}(\sigma(\hat{s}))
\rho_\sigma^{ab}(\hat{s})|J_{ab}^\sigma|(\hat{s}) d\hat{s}\ \langle E^\alpha , E^\beta \rangle,
\end{align*}
provided $\{\hat{f}_0, \hat{f}_1\} = \{\hat{g}_0, \hat{g}_1\}$. Otherwise, it is defined as zero.
\end{defn}

For a surface $S$, $\int_S d\rho$ gives us the area of a surface $S$. See Corollary \ref{c.w.4} and its following remark. Unfortunately, area of a surface is not invariant under boost. Therefore, the above inner product will not be invariant if we boost $S$. Hence we will not consider this inner product.

Given a surface $S$, let $\sigma$ be any parametrization of $S$. We can define $\int_S d\rho$ using this parametrization $\sigma$ as given in Definition \ref{d.r.1}. Now, replace $\sigma \equiv (\sigma_0, \sigma_1, \sigma_2, \sigma_3)$ with $\acute{\sigma} = (i\sigma_0, \sigma_1, \sigma_2, \sigma_3)$ and hence define $\int_S d\acute{\rho}$ as given in Definition \ref{d.r.2}. Essentially, we change the time component in the formula for $\int_S d\rho$ to be purely imaginary. Now we will define and use the following inner product in the rest of this article.

\begin{defn}\label{d.inn.2}
Refer to Definition \ref{d.r.2} for the definition of $\int_S d|\acute{\rho}|$, for a surface $S$. Assume that $S$ and $\tilde{S}$ be space-like surfaces, contained in a plane. Let $\sigma: I^2 \rightarrow S \cap \tilde{S}$ be a parametrization. Define an inner product $\langle \cdot, \cdot \rangle$ for $\left(S, f_\alpha \otimes E^\alpha, \{\hat{f}_0, \hat{f}_1\} \right) \in V$, $\left(\tilde{S}, g_\beta \otimes E^\beta, \{\hat{g}_0, \hat{g}_1\}\right) \in V$, given by
\begin{align*}
&\left\langle \left(S, f_\alpha \otimes E^\alpha, \{\hat{f}_0, \hat{f}_1\}\right), \left(\tilde{S}, g_\beta \otimes E^\beta, \{\hat{g}_0, \hat{g}_1\}\right) \right\rangle :=  \sum_{\alpha,\beta=1}^N \int_{S \cap \tilde{S}}  f_{\alpha}\overline{g_{\beta}}\ d|
\acute{\rho}|\ \langle E^\alpha , E^\beta \rangle \\
&= \sum_{\alpha,\beta=1}^N \int_{I^2}  f_{\alpha}(\sigma(\hat{s}))\overline{g_{\beta}}(\sigma(\hat{s}))
\left|\sum_{0 \leq a<b \leq 3}\acute{\rho}_\sigma^{ab}(\hat{s})[\det \acute{J}_{ab}^\sigma](\hat{s})\right|d\hat{s}\ \langle E^\alpha , E^\beta \rangle,
\end{align*}
provided $\{\hat{f}_0, \hat{f}_1\} = \{\hat{g}_0, \hat{g}_1\}$. Otherwise, it is zero.

Denote its norm by $|\cdot|$. Let $\mathscr{H}$ denote the Hilbert space containing $V$, using the said inner product.
\end{defn}

\begin{rem}
The integrals in Definitions \ref{d.inn.1} and \ref{d.inn.2} are independent of the choice of parametrization $\sigma$.
\end{rem}

\begin{prop}
The Hilbert space $\mathscr{H}$ is non-separable.
\end{prop}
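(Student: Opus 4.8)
The standard strategy for showing a Hilbert space is non-separable is to exhibit an uncountable family of vectors that are pairwise orthogonal (or at least pairwise at bounded distance apart), since any such family prevents the existence of a countable dense subset. My plan is to construct an uncountable collection of unit vectors in $\mathscr{H}$ whose pairwise distances are bounded below by a fixed positive constant.

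Let me trace through the structure. Each vector is a triple $\left(S, f_\alpha \otimes E^\alpha, \{\hat{f}_0, \hat{f}_1\}\right)$, and the inner product in Definition \ref{d.inn.2} vanishes whenever the frames differ, i.e. when $\{\hat{f}_0, \hat{f}_1\} \neq \{\hat{g}_0, \hat{g}_1\}$. This is the key leverage point: even if two surfaces overlap and the fields agree on the overlap, the inner product is declared to be zero unless the frames coincide. So the most direct route is to fix a single space-like rectangular surface $S_0$ (say the one in the $x^2$-$x^3$ plane) with a fixed nonzero field $f_\alpha \otimes E^\alpha$ normalized to a unit vector, and then to vary the frame $\{\hat{f}_0, \hat{f}_1\}$ over an uncountable set. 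Since the frame is obtained via Lorentz transformations (Definition \ref{d.a.1}), and the restricted Lorentz group $L_+^\uparrow$ is an uncountable continuum, there are uncountably many distinct frames attached to a given surface — for instance by applying a one-parameter family of Lorentz boosts or rotations that fix the surface setwise but move the frame vectors $\hat{f}_0, \hat{f}_1$ to genuinely distinct vectors in $\bR^4$.

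The key steps, in order, are as follows. First, I would fix $S_0$ and a unit-norm field so that the diagonal inner product $\left\langle (S_0, f\otimes E, \{\hat f_0,\hat f_1\}), (S_0, f\otimes E, \{\hat f_0,\hat f_1\})\right\rangle = 1$; this is a single area-type integral and is finite and positive. Second, I would produce an uncountable family of pairwise distinct frames $\{\hat f_0^{(\theta)}, \hat f_1^{(\theta)}\}$ indexed by a real parameter $\theta$ ranging over an interval, each a legitimate frame for $S_0$ (or a fixed common surface) in the sense of Definition \ref{d.a.1}. Third, for $\theta \neq \theta'$ the frames differ, so by the defining clause of Definition \ref{d.inn.2} the inner product between the two corresponding vectors is zero, hence they are orthonormal. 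Fourth, orthonormal vectors satisfy $|v_\theta - v_{\theta'}|^2 = 2$ for $\theta \neq \theta'$, so the open balls of radius $\tfrac{1}{2}$ around them are disjoint; an uncountable family of disjoint open balls cannot be covered by a countable dense set, giving non-separability.

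The main obstacle I anticipate is the second step: I must verify that uncountably many genuinely distinct frames can be attached to one fixed surface, which requires confirming that the subgroup of Lorentz transformations mapping $S_0$ to itself (as a set) acts nontrivially and continuously on the frame data $\{e_0, e_1\}$, producing uncountably many distinct image pairs. Concretely, a rotation in the $x^0$-$x^1$ plane — or more carefully, a transformation preserving the $x^2$-$x^3$ plane containing $S_0$ while rotating $e_0, e_1$ — should do this, but one must check it is both a valid Lorentz transformation (preserving the Minkowski metric of Equation (\ref{e.inn.4})) and that it yields distinct ordered frames for distinct parameter values. Once that continuum of frames is in hand, the orthogonality is immediate from the frame-matching condition in the inner product, and the rest of the argument is the standard topological conclusion.
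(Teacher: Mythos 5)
Your proposal is correct, but it reaches non-separability by a genuinely different mechanism than the paper. The paper fixes the frame and the field and varies the \emph{surface}: it takes the uncountable family of translates $\left(S_0 + \vec{a}, E^\alpha, \{e_0, e_1\}\right)$ of a fixed rectangle in the $x^2$-$x^3$ plane, and the pairwise inner products vanish because distinct translates are disjoint, so the integral over $S \cap \tilde{S}$ in Definition \ref{d.inn.2} is an integral over the empty set. You instead fix the surface and the field and vary the \emph{frame}, getting orthogonality from the other clause of Definition \ref{d.inn.2}: mismatched frames force the inner product to be zero. The one verification you flag as outstanding does go through: place $S_0$ in the $x^2$-$x^3$ plane at $x^0 = x^1 = 0$, and let $B_\zeta$ be the boost of rapidity $\zeta$ along the $x^1$-axis (the hyperbolic rotation of Appendix \ref{a.lzt} --- note that a Euclidean rotation of the $x^0$-$x^1$ plane would \emph{not} be a Lorentz transformation, so your hedge was warranted). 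Then $B_\zeta \in L_+^\uparrow$, it fixes $S_0$ pointwise, and it sends $e_0 \mapsto \cosh\zeta\, e_0 - \sinh\zeta\, e_1$ and $e_1 \mapsto -\sinh\zeta\, e_0 + \cosh\zeta\, e_1$; since $\sinh$ is injective, and since $B_\zeta e_0$ is time-like while $B_\zeta e_1$ is space-like (so the two can never be confused within the unordered pair), distinct $\zeta$ yield distinct legitimate frames for $S_0$ in the sense of Definition \ref{d.a.1}. Comparing the two routes: the paper's needs no stabilizer computation, though as stated it is slightly loose --- translations within the $x^2$-$x^3$ plane can produce overlapping surfaces with nonzero inner product, so one should take $\vec{a}$ along a transverse direction such as the $x^0$-axis; your argument requires the boost computation but isolates a second, independent source of non-separability, namely the continuum of frames attached to a single fixed surface, which shows the frame degree of freedom alone already prevents $\mathscr{H}$ from being separable.
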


\begin{proof}
Consider a compact rectangular surface $S_0$ contained in the $x^2-x^3$ plane, with $\{e_0, e_1\}$ as its frame. Then, we see that \beq \left\{\left(S_0 + \vec{a}, E^\alpha, \{e_0, e_1\}\right):\ \vec{a} \in \mathbb{R} \right\} \nonumber \eeq is an uncountable set of vectors in $\mathscr{H}$, since \beq \left\langle \Big(S_0 + \vec{a}, E^\alpha, \{e_0, e_1\}\Big), \left(S_0 + \vec{b}, E^\alpha, \{e_0, e_1\}\right) \right\rangle = 0, \nonumber \eeq if $\vec{a} \neq \vec{b}$. Thus this uncountable set consists of an orthogonal vectors, hence the Hilbert space is non-separable.
\end{proof}

\section{Unitary representation of inhomogeneous ${\rm SL}(2,\bC)$}

Given a vector $\vec{x} \in \bR^4$, $\{\vec{a}, \Lambda\}$ acts on $\vec{x}$ by $\vec{x} \mapsto \Lambda\vec{x} + \vec{a}$. By abuse of notation, for a surface $S$, $\{\vec{a}, \Lambda\}$ acts on $S$ by $S \mapsto \Lambda S + \vec{a}$, which means first apply a Lorentz transformation $Y(\Lambda)$ to the surface $S$, followed by translating the surface $Y(\Lambda)S$ by $\vec{a}$.

Let $\langle \vec{x}, \vec{y} \rangle := \sum_{a=0}^3 x^a y^a$ be the usual scalar product in $\bR^4$. (Compare with the Minkowski metric given in Equation (\ref{e.inn.4}).) We will now define a unitary representation of the inhomogeneous ${\rm SL}(2,\bC)$, acting on $\mathscr{H}$.

\begin{defn}(Unitary Representation of the inhomogeneous ${\rm SL}(2,\bC)$)\\
Choose 2 numbers $\hat{H}$ and $\hat{P}$, which are fixed. There is a continuous unitary representation of the inhomogeneous ${\rm SL}(2,\bC)$, $\{\vec{a}, \Lambda\} \mapsto U(\vec{a}, \Lambda)$. Now, $U(\vec{a},\Lambda)$ acts on the Hilbert space $\mathscr{H}$, by \beq \left(S, f_\alpha \otimes E^\alpha, \{\hat{f}_0, \hat{f}_1\}\right) \mapsto U(\vec{a},\Lambda)\left(S, f_\alpha \otimes E^\alpha, \{\hat{f}_0, \hat{f}_1\}\right), \nonumber \eeq as
\begin{align}
U(\vec{a},\Lambda)&\left(S, f_\alpha \otimes E^\alpha, \{\hat{f}_0, \hat{f}_1\}\right) \nonumber \\
&:= \left(\Lambda S + \vec{a}, e^{-i[\vec{a}\cdot ( \hat{H}\Lambda\hat{f}_0 + \hat{P}\Lambda\hat{f}_1)] } f_\alpha[\Lambda^{-1}(\cdot- \vec{a})] \otimes E^{\alpha} , \{\Lambda \hat{f}_0, \Lambda \hat{f}_1\}\right). \label{e.u.4}
\end{align}
Here, $S$ is a space-like surface contained in some plane.
\end{defn}

\begin{rem}
\begin{enumerate}
  \item The unitary representation depends on $\hat{H}, \hat{P} \in \bR$, which are fixed.
  \item Let us explain the formula on the RHS of Equation(\ref{e.u.4}). Suppose $\sigma: I^2 \rightarrow S$ is a parametrization for $S$. Then $\Lambda \sigma + \vec{a} \equiv Y(\Lambda)\sigma + \vec{a}$ will be a parametrization for $Y(\Lambda)S+ \vec{a}$. And, the field at the point $\vec{x} := Y(\Lambda)\sigma(\hat{s}) + \vec{a} \in \Lambda S + \vec{a}$, is given by
      \begin{align*}
      e^{-i[\vec{a}\cdot (\hat{H}Y(\Lambda)\hat{f}_0 + \hat{P}Y(\Lambda)\hat{f}_1)]}& f_\alpha[Y(\Lambda^{-1})(\vec{x}- \vec{a})] \otimes E^{\alpha} \\
      &\equiv e^{-i[\vec{a}\cdot (\hat{H}Y(\Lambda)\hat{f}_0 + \hat{P}Y(\Lambda)\hat{f}_1)]} f_\alpha[\sigma(\hat{s})]\otimes E^\alpha.
      \end{align*}
\end{enumerate}
\end{rem}

We will now prove that $U(\vec{a},\Lambda)$ is unitary and that $U$ is a unitary representation of the inhomogeneous group ${\rm SL}(2, \bC)$.

\begin{thm}\label{t.b.2}
The map $U(\vec{a}, \Lambda)$ defined on $\mathscr{H}$ is unitary, using the inner product $\langle \cdot, \cdot \rangle$ as defined in Definition \ref{d.inn.2}.
\end{thm}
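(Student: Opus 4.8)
The plan is to verify directly the two defining properties of a unitary operator: that $U(\vec{a},\Lambda)$ preserves the inner product of Definition \ref{d.inn.2} and that it is a bijection of $\mathscr{H}$. By sesquilinearity of the inner product and linearity of the action in Equation (\ref{e.u.4}), it suffices to establish $\langle U(\vec{a},\Lambda)v, U(\vec{a},\Lambda)w\rangle = \langle v,w\rangle$ for a pair of generating vectors $v = (S, f_\alpha\otimes E^\alpha, \{\hat{f}_0,\hat{f}_1\})$ and $w = (\tilde{S}, g_\beta\otimes E^\beta, \{\hat{g}_0,\hat{g}_1\})$. First I would dispose of the frame condition: since $Y(\Lambda)$ is invertible, the transformed frames $\{\Lambda\hat{f}_0,\Lambda\hat{f}_1\}$ and $\{\Lambda\hat{g}_0,\Lambda\hat{g}_1\}$ coincide if and only if $\{\hat{f}_0,\hat{f}_1\}=\{\hat{g}_0,\hat{g}_1\}$, so both sides vanish simultaneously in the non-matching case and only matching frames need to be treated.

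In the matching case I would next exploit the structure of the phase factor. The exponential attached to $U(\vec{a},\Lambda)v$ is $e^{-i[\vec{a}\cdot(\hat{H}\Lambda\hat{f}_0+\hat{P}\Lambda\hat{f}_1)]}$, and the one attached to $w$ is conjugated inside the inner product; because $\hat{f}_0=\hat{g}_0$ and $\hat{f}_1=\hat{g}_1$, it is precisely the complex conjugate of the former, so the two phases cancel to $1$ and drop out of the integrand. This cancellation is exactly what the form of the phase in Equation (\ref{e.u.4}) is engineered to produce. It then remains to compare the two surface integrals. Using that $\vec{x}\mapsto \Lambda\vec{x}+\vec{a}$ is a bijection of $\bR^4$, I would identify the domain of integration via $(\Lambda S+\vec{a})\cap(\Lambda\tilde{S}+\vec{a})=\Lambda(S\cap\tilde{S})+\vec{a}$ and take $\Lambda\sigma+\vec{a}$ as a parametrization of this set whenever $\sigma$ parametrizes $S\cap\tilde{S}$. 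Under this reparametrization the pulled-back functions $f_\alpha[\Lambda^{-1}(\cdot-\vec{a})]$ and $g_\beta[\Lambda^{-1}(\cdot-\vec{a})]$ become $f_\alpha\circ\sigma$ and $g_\beta\circ\sigma$, so the whole identity reduces to the invariance of the measure $d|\acute{\rho}|$ under $\vec{x}\mapsto\Lambda\vec{x}+\vec{a}$.

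This invariance is the technical heart and the main obstacle. Translation invariance is immediate, since $d|\acute{\rho}|$ is built from the minors $\det\acute{J}^\sigma_{ab}$, which depend only on the first derivatives of $\sigma$ and are unchanged by adding $\vec{a}$. The Lorentz part is the crux: the purpose of replacing the time component $\sigma_0$ by $i\sigma_0$ in passing from $d\rho$ to $d|\acute{\rho}|$ (Definitions \ref{d.r.1}, \ref{d.r.2}) is precisely that the resulting density is a Minkowski area form rather than a Euclidean one. Concretely, I would show that $\sum_{0\le a<b\le 3}\acute{\rho}^{ab}_\sigma\det\acute{J}^\sigma_{ab}$ transforms as a Minkowski-invariant pairing between the tangent bivector $\partial_s\sigma\wedge\partial_t\sigma$ and the bivector encoded by the frame $\{\hat{f}_0,\hat{f}_1\}$; since $Y(\Lambda)$ preserves the Minkowski metric of Equation (\ref{e.inn.4}), it preserves this pairing, and taking absolute values gives the invariance of $d|\acute{\rho}|$. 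I would invoke the form of this invariance recorded with Corollary \ref{c.w.4} and Definition \ref{d.r.2}.

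Finally I would establish bijectivity. Preservation of the inner product already makes $U(\vec{a},\Lambda)$ an isometry, hence injective; for surjectivity I would exhibit the explicit inverse $U(-\Lambda^{-1}\vec{a},\Lambda^{-1})$, whose existence rests on the invertibility of $\vec{x}\mapsto\Lambda\vec{x}+\vec{a}$ and which returns each generating vector to its preimage. Combining the isometry property with bijectivity yields that $U(\vec{a},\Lambda)$ is unitary.
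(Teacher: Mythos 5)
Your proposal is correct and follows essentially the same route as the paper: the phase factor is unimodular and drops out of the inner product, translation invariance of the density is immediate since it depends only on first derivatives of $\sigma$, and the whole problem reduces to Lorentz invariance of $d|\acute{\rho}|$, which is exactly the content of the paper's Lemma \ref{l.b.1} (Equation (\ref{e.j.6})). One correction to your sketch of that key step: the density $\sum_{0\le a<b\le 3}\acute{\rho}_\sigma^{ab}\det\acute{J}_{ab}^\sigma$ is \emph{not} a pairing of the tangent bivector with a bivector built from the frame $\{\hat{f}_0,\hat{f}_1\}$ --- it does not involve the frame at all. Rather, the substitution $\sigma_0\mapsto i\sigma_0$ turns the Euclidean Gram matrix into the Minkowski one, so the density equals $\sqrt{[\sigma'\cdot\sigma'][\dot{\sigma}\cdot\dot{\sigma}]-[\sigma'\cdot\dot{\sigma}]^2}$, the Minkowski norm of the tangent bivector $\sigma'\wedge\dot{\sigma}$, which is manifestly invariant under $Y(\Lambda)$; this is what Equation (\ref{e.j.6}) records, and it is Lemma \ref{l.b.1}, not Corollary \ref{c.w.4}, that supplies it. The only structural difference is at the end: you obtain invertibility by exhibiting $U(-\Lambda^{-1}\vec{a},\Lambda^{-1})$ directly, whereas the paper first proves the composition law $U(\vec{b},\Gamma)U(\vec{a},\Lambda)=U(\vec{b}+\Gamma\vec{a},\Gamma\Lambda)$ (combining the two phases via $\Gamma\vec{x}\cdot\Gamma\vec{y}=\vec{x}\cdot\vec{y}$), of which your inverse is the special case $\Gamma=\Lambda^{-1}$, $\vec{b}=-\Lambda^{-1}\vec{a}$; either suffices for unitarity, and your explicit treatment of the frame-matching case and of the identification $(\Lambda S+\vec{a})\cap(\Lambda\tilde{S}+\vec{a})=\Lambda(S\cap\tilde{S})+\vec{a}$ fills in details that Equation (\ref{e.u.4}) and the paper's proof leave implicit.
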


\begin{proof}
We will first show that it is a representation. Given $\{\vec{b}, \Gamma\}, \{\vec{a}, \Lambda\}$ from ${\rm SL}(2, \bC)$, we have ($\hat{g}_0 = \Lambda \hat{f}_0, \hat{g}_1 = \Lambda \hat{f}_1$)
\begin{align*}
&U(\vec{b},\Gamma)U(\vec{a},\Lambda)\left(S, f_\alpha \otimes E^\alpha, \{\hat{f}_0, \hat{f}_1\}\right) \\
&= U(\vec{b},\Gamma)\left(\Lambda S + \vec{a}, e^{-i[\vec{a}\cdot (\hat{H}\hat{g}_0 + \hat{P}\hat{g}_1)] } f_\alpha[\Lambda^{-1}(\cdot- \vec{a})] \otimes E^{\alpha} , \{\hat{g}_0,  \hat{g}_1\}\right) \\
&= \left(\Gamma\Lambda S + \Gamma\vec{a} + \vec{b},
\hat{E} f_\alpha[\Lambda^{-1}\Gamma^{-1}(\cdot- \vec{b}) - \Lambda^{-1}\vec{a})] \otimes E^{\alpha} , \{\Gamma \hat{g}_0, \Gamma \hat{g}_1\}\right) \\
&= \left(\Gamma\Lambda S + \Gamma\vec{a} + \vec{b},
\hat{E} f_\alpha[(\Gamma\Lambda)^{-1}(\cdot- \vec{b} - \Gamma\vec{a})] \otimes E^{\alpha} , \{\Gamma\Lambda \hat{f}_0, \Gamma\Lambda \hat{f}_1\}\right),
\end{align*}
whereby $\hat{E}$ is equal to
\begin{align*}
&e^{-i[\vec{b}\cdot (\hat{H}\Gamma\hat{g}_0 + \hat{P}\Gamma\hat{g}_1)] }e^{-i[\vec{a}\cdot ( \hat{H}\hat{g}_0 + \hat{P}\hat{g}_1)] }
= e^{-i[(\vec{b} + \Gamma \vec{a})\cdot \Gamma\Lambda( \hat{H}\hat{f}_0 + \hat{P}\hat{f}_1) ]},
\end{align*}
because $\Gamma \vec{x} \cdot \Gamma \vec{y} = \vec{x} \cdot \vec{y}$.

Hence, we see that
\begin{align*}
U(\vec{b},\Gamma)U(\vec{a},\Lambda)\left(S, f_\alpha \otimes E^\alpha, \{\hat{f}_0, \hat{f}_1\}\right)
= U(\vec{b} + \Gamma\vec{a},\Gamma\Lambda)\left(S, f_\alpha \otimes E^\alpha, \{\hat{f}_0, \hat{f}_1\}\right).
\end{align*}

Write \beq \acute{\rho}_\sigma = \sum_{0\leq a < b \leq 3}\acute{\rho}_\sigma^{ab} \det \acute{J}_{ab}^\sigma. \nonumber \eeq Note that multiplication by $e^{ic}$ is unitary, $c \in \bR$. By definition, $\acute{\rho}_\sigma$ is invariant under translation. From Equation (\ref{e.u.4}), it is clear that the inner product in Definition \ref{d.inn.2} will be invariant under translation $U(\vec{a},1)$. We will now check that it is unitary under a Lorentz transformation $\Lambda$.

Let $\sigma: I^2 \rightarrow \bR^4$ be a parametrization of a compact rectangular surface $S$. Under a Lorentz transformation $\Lambda$, the surface $\Lambda S$ can be parametrized by $\Lambda \sigma$. From Equation (\ref{e.j.6}), we have that $\acute{\rho}_{\Lambda \sigma} = \acute{\rho}_{\sigma}$, pointwise under Lorentz transformation. From Definition \ref{d.inn.2},
\begin{align}
\sum_{\alpha,\beta=1}^N &\int_{I^2}  f_{\alpha}(\sigma(\hat{s}))\overline{g_{\beta}}(\sigma(\hat{s}))
|\acute{\rho}_\sigma|(\hat{s}) d\hat{s}\ \langle E^\alpha , E^\beta \rangle \nonumber \\
&\longmapsto_{U(0, \Lambda)} \sum_{\alpha,\beta=1}^N \int_{I^2}  f_{\alpha}(\sigma(\hat{s}))\overline{g_{\beta}}(\sigma(\hat{s}))
|\acute{\rho}_{\Lambda \sigma}|(\hat{s}) d\hat{s}\ \langle E^\alpha , E^\beta \rangle \nonumber \\
&= \sum_{\alpha,\beta=1}^N \int_{I^2}  f_{\alpha}(\sigma(\hat{s}))\overline{g_{\beta}}(\sigma(\hat{s}))
|\acute{\rho}_\sigma|(\hat{s}) d\hat{s}\ \langle E^\alpha , E^\beta \rangle. \label{e.inn.10}
\end{align}

\end{proof}

\appendix

\section{Lorentz transformation}\label{a.lzt}

For the reader who is not familiar with Lorentz transformation, we have included this section for his convenience. Let $c>0$ be the speed of light. Given two frames $F$ and $F'$, let $F'$ move with velocity $v$ in the $x^1$ direction. Let $\beta = v/c$ and because $v < c$, we have $-1<\beta < 1$.

Suppose $(x^0, x)$ are the coordinates of an event recorded by an observer in $F$ and  $(\tilde{x}^0, \tilde{x})$ are the coordinates of the same event recorded by an observer in $F'$. Then, we have the following coordinate transformation $\{x^a\}_{a=0}^3 \mapsto \{\tilde{x}^a\}_{a=0}^3$,
\beq
\left(
  \begin{array}{c}
    \tilde{x}^0 \\
    \tilde{x}^1 \\
  \end{array}
\right) =
\gamma
\left(
  \begin{array}{cc}
    1 &\ -\frac{\beta}{c} \\
    -\beta c &\ 1 \\
  \end{array}
\right)
\left(
  \begin{array}{c}
    x^0 \\
    x^1 \\
  \end{array}
\right),\qquad \tilde{x}^2 = x^2,\quad \tilde{x}^3 = x^3, \label{e.lz.1}
\eeq whereby \beq \gamma = \frac{1}{\sqrt{1 - \beta^2}}. \nonumber \eeq Such a transformation is linear and we say it is a Lorentz boost in the $x^1$ direction. One can write down similar transformations for boosts in the $x^2$ and $x^3$ directions.

Its inverse transformation will be given by
\beq
\left(
  \begin{array}{c}
    x^0 \\
    x^1 \\
  \end{array}
\right) =
\gamma
\left(
  \begin{array}{cc}
    1 &\ \frac{\beta}{c} \\
    \beta c &\ 1 \\
  \end{array}
\right)
\left(
  \begin{array}{c}
    \tilde{x}^0 \\
    \tilde{x}^1 \\
  \end{array}
\right),\qquad x^2 = \tilde{x}^2,\quad x^3 = \tilde{x}^3. \nonumber
\eeq

We are going to write the transformation in terms of hyperbolic functions. Define $\zeta$ via $\beta = \tanh \zeta$ or $\zeta = \tanh^{-1}\beta$ which is well-defined, because $-1< \beta < 1$. Then, $1 - \beta^2 = {\rm sech}^2 \zeta$, so we have $\gamma= \cosh \zeta$. Hence, $\beta \gamma = \sinh \zeta$.

We can now rewrite Equation (\ref{e.lz.1}) in terms of hyperbolic functions, \beq
\left(
  \begin{array}{c}
    \tilde{x}^0 \\
    \tilde{x}^1 \\
  \end{array}
\right) =
\left(
  \begin{array}{cc}
    \cosh \zeta &\ -\frac{\sinh \zeta}{c} \\
    -c\sinh \zeta &\ \cosh\zeta \\
  \end{array}
\right)
\left(
  \begin{array}{c}
    x^0 \\
    x^1 \\
  \end{array}
\right),\quad \tilde{x}^2 = x^2,\quad \tilde{x}^3 = x^3. \nonumber
\eeq

Compare this with rotation about the $x^3$-axis, \beq
\left(
  \begin{array}{c}
    x^1 \\
    x^2 \\
  \end{array}
\right)\ \longmapsto
\left(
  \begin{array}{cc}
    \cos \theta &\ -\sin \theta \\
    \sin \theta &\ \cos \theta \\
  \end{array}
\right)
\left(
  \begin{array}{c}
    x^1 \\
    x^2 \\
  \end{array}
\right),\quad x^3 \mapsto x^3. \nonumber \eeq

\section{Surface Integrals}

Let $S$ be a surface embedded in $\bR^4$ and $\sigma\equiv ( \sigma_0, \sigma_1, \sigma_2, \sigma_3): [0,1]^2 \equiv I^2 \rightarrow \bR^4$ be its parametrization. Here, $\sigma' = \partial \sigma/\partial s$ and $\dot{\sigma} = \partial \sigma/\partial t$.

\begin{defn}\label{d.r.1}
Let $\sigma: [0,1]^2 \equiv I^2 \rightarrow \bR^4$ be a parametrization of a surface $S \subset \bR^4$.
For $a,b=0,1,2,3$, define Jacobian matrices,
\begin{align}
J_{ab}^\sigma(s,t) = \left(
               \begin{array}{cc}
                 \sigma_a'(s,t) & \dot{\sigma}_a(s,t) \\
                 \sigma_b'(s,t) & \dot{\sigma}_b(s,t) \\
               \end{array}
             \right),\ a \neq b, \nonumber
\end{align}
and write $|J^\sigma_{ab}| = \sqrt{[\det{J^\sigma_{ab}}]^2}$ and $W_{ab}^{ cd} := J_{cd}^\sigma J_{ab}^{\sigma, -1}$, $a, b, c, d$ all distinct. Note that $W_{cd}^{ab} = (W_{ab}^{cd})^{-1}$.

For $a,b, c, d$ all distinct, define $\rho_\sigma^{ab}: I^2 \rightarrow \bR$ by
\begin{align}
\rho_\sigma^{ab} =& \frac{1}{\sqrt{\det\left[ 1+ W_{ab}^{cd,T}W_{ab}^{cd}\right]}} \equiv \frac{|J_{ab}^\sigma|}{\sqrt{\det\left[J_{ab}^{\sigma,T}J_{ab}^\sigma + J_{cd}^{\sigma,T}J_{cd}^\sigma\right]}}. \label{e.ri.1}
\end{align}
\end{defn}

\begin{cor}\label{c.w.4}
Define
\begin{align}
\int_S d\rho := \sum_{0 \leq a<b \leq 3}\int_{I^2}\rho_\sigma^{ab}(s,t)|J_{ab}^\sigma|(s,t)
\ ds dt, \nonumber
\end{align}
which gives us the area of the surface $S$.
\end{cor}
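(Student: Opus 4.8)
The plan is to show that the quantity $\int_S d\rho$ defined here agrees with the classical Riemannian area of the parametrized surface $\sigma:I^2\to\bR^4$ computed with the standard Euclidean metric. Recall that this area equals $\int_{I^2}\sqrt{\det g}\,ds\,dt$, where $g$ is the $2\times 2$ first fundamental form, i.e.\ the Gram matrix of the tangent vectors $\sigma'=\partial\sigma/\partial s$ and $\dot\sigma=\partial\sigma/\partial t$. Writing $M$ for the $4\times 2$ matrix with columns $\sigma'$ and $\dot\sigma$, we have $g=M^TM$, and the Cauchy--Binet formula gives $\det g=\sum_{0\le a<b\le 3}(\det J_{ab}^\sigma)^2$, since the $2\times 2$ minor of $M$ on rows $a,b$ is exactly $\det J_{ab}^\sigma$. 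Thus it suffices to establish the pointwise identity $\sum_{0\le a<b\le 3}\rho_\sigma^{ab}\,|J_{ab}^\sigma|=\sqrt{\det g}$ on $I^2$, and then integrate.

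The key step is to observe that, for complementary index pairs $\{a,b\}$ and $\{c,d\}$ with $\{a,b,c,d\}=\{0,1,2,3\}$, the two Gram matrices reassemble the full first fundamental form:
\beq
J_{ab}^{\sigma,T}J_{ab}^\sigma + J_{cd}^{\sigma,T}J_{cd}^\sigma = M^TM = g. \nonumber
\eeq
Indeed, the $(i,j)$ entry of $J_{ab}^{\sigma,T}J_{ab}^\sigma$ sums the products of the $i$-th and $j$-th columns of $M$ over the rows $a,b$; adding the complementary pair restores the sum over all four rows, which is precisely $(M^TM)_{ij}$. Consequently the denominator $\sqrt{\det[J_{ab}^{\sigma,T}J_{ab}^\sigma+J_{cd}^{\sigma,T}J_{cd}^\sigma]}$ appearing in the definition of $\rho_\sigma^{ab}$ equals $\sqrt{\det g}$ and is \emph{independent} of the chosen pair. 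Writing $p_{ab}:=\det J_{ab}^\sigma$, so that $|J_{ab}^\sigma|=|p_{ab}|$, this gives $\rho_\sigma^{ab}=|p_{ab}|/\sqrt{\det g}$ and hence $\rho_\sigma^{ab}\,|J_{ab}^\sigma|=p_{ab}^2/\sqrt{\det g}$.

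Summing over the six pairs and invoking Cauchy--Binet once more,
\beq
\sum_{0\le a<b\le 3}\rho_\sigma^{ab}\,|J_{ab}^\sigma| = \frac{\sum_{0\le a<b\le 3}p_{ab}^2}{\sqrt{\det g}} = \frac{\det g}{\sqrt{\det g}} = \sqrt{\det g}, \nonumber
\eeq
which is the area integrand, so integrating over $I^2$ completes the argument. The only genuinely substantive point is the complementary-pair identity of the second paragraph; once that reformulation is in hand the rest is a two-line cancellation, so I expect no real obstacle beyond spotting it. Two minor matters to record are that $\rho_\sigma^{ab}$ is unambiguously defined because in $\bR^4$ each pair $\{a,b\}$ has a \emph{unique} complement $\{c,d\}$, and that $\det g>0$ wherever $\sigma$ is an immersion, so the quotients are well defined almost everywhere on $I^2$.
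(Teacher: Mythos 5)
Your proposal is correct. The pointwise identity $\sum_{0\le a<b\le 3}\rho_\sigma^{ab}|J_{ab}^\sigma|=\sqrt{\det g}$ holds exactly as you argue: the complementary-pair identity $J_{ab}^{\sigma,T}J_{ab}^\sigma+J_{cd}^{\sigma,T}J_{cd}^\sigma=M^TM=g$ makes the denominator in the definition of $\rho_\sigma^{ab}$ equal to $\sqrt{\det g}$ for every pair, and Cauchy--Binet then collapses the sum of the squared minors to $\det g$, so the integrand is the classical Euclidean area element. Your route is, however, not the one the paper takes. The paper does not prove the corollary by a pointwise identification; instead it verifies the area claim by a direct calculation only for rectangles lying in a coordinate plane $x^i$--$x^j$, and then proves Lemma \ref{l.i.1}, which shows that $\int_S d\rho$ is invariant under any orthonormal change of basis of $\bR^4$, so that a general planar space-like rectangle can be rotated into a coordinate plane. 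The mathematical core overlaps substantially: the paper's Equations (\ref{e.j.2})--(\ref{e.j.1}) are precisely your complementary-pair identity, and its exterior-algebra computations ($\det[J_{ab}^{\sigma,T}J_{ab}^\sigma+J_{cd}^{\sigma,T}J_{cd}^\sigma]=\langle\sigma'\wedge\dot\sigma,\sigma'\wedge\dot\sigma\rangle_2$ and $\det J_{ab}^\sigma=\langle\sigma'\wedge\dot\sigma,e_a\wedge e_b\rangle_2$) are the wedge-product form of your Cauchy--Binet step; indeed, adding Parseval's identity in $\Lambda^2\bR^4$ to the paper's Equation (\ref{e.j.4}) would reproduce your pointwise identity in one line. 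What your approach buys is directness and generality: it works for an arbitrary immersed parametrization, needs no base-case computation and no reduction by rotations, and cleanly isolates where nondegeneracy ($\det g>0$) is used. What the paper's approach buys is the invariance statement itself, which is not a throwaway: Lemma \ref{l.i.1} and its Equation (\ref{e.j.4}) are reused later (in Lemma \ref{l.b.1}, Equation (\ref{e.j.6})) to obtain Lorentz invariance of $\int_S d\acute{\rho}$, which is the property the Hilbert space construction actually depends on.
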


\begin{rem}
Note that $\int_S d\rho$ is clearly independent of the choice of parametrization used. When $S$ is a rectangular surface in $x^i-x^j$ plane, a direct calculation will show that it gives us the area of a surface $S$. To show that it is indeed the area, we will show in the next lemma, that it is independent of the choice of orthonormal basis, hence showing that it is the area.
\end{rem}

\begin{lem}\label{l.i.1}
We have $\int_S d\rho$ is independent of the orthonormal basis used in $\bR^4$.
\end{lem}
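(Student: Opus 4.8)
The plan is to show that $\int_S d\rho$ is nothing but the intrinsic Riemannian surface area of $S$ induced by the Euclidean metric on $\bR^4$, and then to observe that this area depends only on the Euclidean inner products of the tangent vectors, which are by definition unchanged when we pass to another orthonormal basis. The point is that the apparently coordinate-dependent quantities $\rho_\sigma^{ab}$ and $|J_{ab}^\sigma|$ conspire to reassemble a manifestly invariant object.

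Write $p := \sigma'$ and $q := \dot\sigma$ for the two tangent vectors in $\bR^4$, and let $D\sigma$ be the $4\times 2$ Jacobian with columns $p$ and $q$, so that $J_{ab}^\sigma = \left(\begin{smallmatrix} p_a & q_a \\ p_b & q_b \end{smallmatrix}\right)$ is exactly the $2\times 2$ minor of $D\sigma$ on rows $a,b$. The key step is to evaluate the denominator appearing in (\ref{e.ri.1}). For a partition $\{a,b,c,d\}=\{0,1,2,3\}$, the diagonal entries of $J_{ab}^{\sigma,T}J_{ab}^\sigma + J_{cd}^{\sigma,T}J_{cd}^\sigma$ sum over all four indices to $\|p\|^2$ and $\|q\|^2$, while the off-diagonal entries sum to $\langle p,q\rangle$; hence
\[
J_{ab}^{\sigma,T}J_{ab}^\sigma + J_{cd}^{\sigma,T}J_{cd}^\sigma = \begin{pmatrix} \|p\|^2 & \langle p,q\rangle \\ \langle p,q\rangle & \|q\|^2 \end{pmatrix} =: G = D\sigma^T D\sigma,
\]
independently of which complementary pair $(c,d)$ is chosen. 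Consequently $\rho_\sigma^{ab} = |J_{ab}^\sigma|/\sqrt{\det G}$ with \emph{the same} denominator for every pair, and the Cauchy--Binet formula gives $\sum_{0\le a<b\le 3} |J_{ab}^\sigma|^2 = \det(D\sigma^T D\sigma) = \det G$. Summing over pairs then yields
\[
\sum_{0\le a<b\le 3} \rho_\sigma^{ab}\,|J_{ab}^\sigma| = \frac{\sum_{a<b}|J_{ab}^\sigma|^2}{\sqrt{\det G}} = \frac{\det G}{\sqrt{\det G}} = \sqrt{\det G},
\]
so that $\int_S d\rho = \int_{I^2} \sqrt{\det G}\; ds\,dt$ is precisely the standard surface-area integral of $S$.

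Basis independence is now immediate. A change of orthonormal basis in $\bR^4$ acts on the coordinates of $\sigma$ by an orthogonal matrix $O$, replacing $D\sigma$ by $OD\sigma$ and hence $G$ by $(OD\sigma)^T(OD\sigma) = D\sigma^T O^T O\, D\sigma = D\sigma^T D\sigma = G$, since $O^T O = 1$. Thus the integrand $\sqrt{\det G}$, and therefore $\int_S d\rho$, is unchanged. The main obstacle is the algebraic identity of the first step: verifying that the sum of the two $2\times 2$ Gramians over a complementary pair of index sets reconstructs the full four-dimensional Gram matrix $G$ (and hence is the same for all choices of pair), together with the correct application of Cauchy--Binet to match $\sum_{a<b}|J_{ab}^\sigma|^2$ with $\det G$. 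Once this identity is in hand, the reduction to the Euclidean area element and its invariance under $O(4)$ are routine.
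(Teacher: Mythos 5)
Your proof is correct, and it rests on the same crucial algebraic observation as the paper's proof: for any partition of $\{0,1,2,3\}$ into complementary pairs $\{a,b\}$ and $\{c,d\}$, the sum $J_{ab}^{\sigma,T}J_{ab}^\sigma + J_{cd}^{\sigma,T}J_{cd}^\sigma$ equals the full Gram matrix $G = D\sigma^T D\sigma$, which is the paper's Equation (\ref{e.j.1}). The two arguments finish differently, however. The paper keeps the sum over pairs intact: it writes each minor as a Pl\"ucker coordinate $\det J_{ab}^\sigma = \langle \sigma'\wedge\dot{\sigma}, e_a\wedge e_b\rangle_2$ in $\Lambda^2\bR^4$, observes that an orthogonal change of basis $A$ carries the orthonormal basis $\{e_a\wedge e_b\}$ of $\Lambda^2\bR^4$ to another orthonormal basis $\{A^Te_a\wedge A^Te_b\}$, and concludes invariance of $\sum_{a<b}\rho_\sigma^{ab}|J_{ab}^\sigma|$ by a Parseval-type argument (Equation (\ref{e.j.4})). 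You instead collapse the sum first: Cauchy--Binet gives $\sum_{a<b}|J_{ab}^\sigma|^2 = \det G$, so the whole integrand equals $\sqrt{\det G}$, the classical Euclidean area element, whose invariance under an orthogonal change of basis is immediate from $G \mapsto D\sigma^T O^T O\, D\sigma = G$. These finishes are two faces of the same fact --- Cauchy--Binet for a $4\times 2$ matrix is precisely the identity $\langle \sigma'\wedge\dot{\sigma}, \sigma'\wedge\dot{\sigma}\rangle_2 = \det G$ --- but yours is more elementary (no exterior algebra needed) and buys something extra: it proves directly that $\int_S d\rho$ is the intrinsic Riemannian area of $S$, a fact the paper reaches more indirectly, by verifying it on coordinate rectangles and then invoking this lemma, as explained in the remark preceding it.
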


\begin{proof}
We let $\langle \cdot, \cdot \rangle$ denote the standard inner product on $\bR^4$, and $\{e_0, e_1, e_2, e_3\}$ be an orthonormal basis for $\bR^4$. Note that $\langle \cdot, \cdot \rangle$ will induce an inner product $\langle \cdot, \cdot \rangle_2$ on the second exterior power $\Lambda^2 \bR^4$, and \beq \{e_0\wedge e_1, e_0 \wedge e_2, e_0 \wedge e_3, e_1 \wedge e_2, e_3 \wedge e_1, e_1 \wedge e_2 \} \nonumber \eeq is an orthonormal basis. See \cite{MR1312606}.

Let $\hat{\sigma} = A\sigma$, $A$ is an orthogonal matrix. We will first show that $J_{ab}^{\sigma,T}J_{ab}^\sigma + J_{cd}^{\sigma,T}J_{cd}^\sigma$ is independent of any orthonormal basis used. We will only show for $a=0$, $b=1$, $c = 2$, $d = 3$.

Now, we see that \beq
\left(
  \begin{array}{c}
    J_{01}^{\hat{\sigma}} \\
    J_{23}^{\hat{\sigma}} \\
  \end{array}
\right)
 = A(\sigma', \dot{\sigma})\equiv (A\sigma', A\dot{\sigma}), \nonumber \eeq therefore
\begin{align}
J_{01}^{\hat{\sigma},T}J_{01}^{\hat{\sigma}} + J_{23}^{\hat{\sigma},T}J_{23}^{\hat{\sigma}} = &
\left(
  \begin{array}{c}
    \sigma^{\prime, T} \\
    \dot{\sigma}^T \\
  \end{array}
\right)
A^T A(\sigma', \dot{\sigma})= \left(
  \begin{array}{c}
    \sigma^{\prime, T} \\
    \dot{\sigma}^T \\
  \end{array}
\right)(\sigma', \dot{\sigma}) =
\left(
  \begin{array}{cc}
    \langle \sigma', \sigma' \rangle &\ \langle \sigma', \dot{\sigma} \rangle \\
    \langle \dot{\sigma}, \sigma' \rangle &\ \langle \dot{\sigma}, \dot{\sigma} \rangle \\
  \end{array}
\right).\label{e.j.2}
\end{align}
Indeed, we see that \beq J_{ab}^{\hat{\sigma},T}J_{ab}^{\hat{\sigma}} + J_{cd}^{\hat{\sigma},T}J_{cd}^{\hat{\sigma}} = \left(
  \begin{array}{cc}
    \langle \sigma', \sigma' \rangle &\ \langle \sigma', \dot{\sigma} \rangle \\
    \langle \dot{\sigma}, \sigma' \rangle &\ \langle \dot{\sigma}, \dot{\sigma} \rangle \\
  \end{array}
\right), \label{e.j.1} \eeq for any distinct $a,b,c,d$, and hence \beq \det\left[J_{ab}^{\hat{\sigma},T}J_{ab}^{\hat{\sigma}} + J_{cd}^{\hat{\sigma},T}J_{cd}^{\hat{\sigma}} \right] = \langle \sigma'\wedge \dot{\sigma}, \sigma'\wedge \dot{\sigma} \rangle_2. \nonumber \eeq

Now \beq \det J_{ab}^\sigma = \det\left(
\begin{array}{cc}
  \langle \sigma', e_a \rangle &\ \langle \dot{\sigma}, e_a \rangle \\
  \langle \sigma', e_b \rangle &\ \langle \dot{\sigma}, e_b \rangle
\end{array}
\right) = \langle \sigma' \wedge \dot{\sigma}, e_a \wedge e_b \rangle_2. \nonumber \eeq
Therefore, \beq \det J_{ab}^{\hat{\sigma}} = \det J_{ab}^{A\sigma} = \langle A\sigma' \wedge A\dot{\sigma}, e_a \wedge e_b \rangle_2 = \langle \sigma' \wedge \dot{\sigma}, A^Te_a \wedge A^Te_b \rangle_2. \nonumber \eeq Note that the linear transformation $e_a \wedge e_b \mapsto A^T e_a \wedge A^T e_b$ is an orthogonal transformation, if $A$ is an orthogonal $4 \times 4$ matrix.

Hence,
\begin{align}
\sum_{0 \leq a<b \leq 3}\rho_\sigma^{ab}|J_{ab}^\sigma| =& \sum_{0 \leq a<b \leq 3}\frac{|J_{ab}^\sigma|^2}{\sqrt{\det\left[J_{ab}^{\sigma,T}J_{ab}^\sigma + J_{cd}^{\sigma,T}J_{cd}^\sigma\right]}} \nonumber \\
\longmapsto& \sum_{0 \leq a<b \leq 3}\frac{|J_{ab}^{A\sigma}|^2}{\sqrt{\det \left[J_{ab}^{A\sigma,T}J_{ab}^{A\sigma} + J_{cd}^{A\sigma,T}J_{cd}^{A\sigma}\right]}} \nonumber \\
=& \sum_{0 \leq a<b \leq 3}\frac{|J_{ab}^{A\sigma}|^2}{\sqrt{\det\left[J_{ab}^{\sigma,T}J_{ab}^{\sigma} + J_{cd}^{\sigma,T}J_{cd}^{\sigma}\right]}} \nonumber \\
=& \sum_{0 \leq a<b \leq 3}\left[ \frac{\langle \sigma' \wedge \dot{\sigma}, A^Te_a \wedge A^Te_b \rangle_2}{\langle \sigma'\wedge \dot{\sigma}, \sigma'\wedge \dot{\sigma} \rangle_2^{1/4}}\right]^2 \nonumber \\
=& \sum_{0 \leq a<b \leq 3}\left[ \frac{\langle \sigma' \wedge \dot{\sigma}, e_a \wedge e_b \rangle_2}{\langle \sigma'\wedge \dot{\sigma}, \sigma'\wedge \dot{\sigma} \rangle_2^{1/4}}\right]^2
=\sum_{0 \leq a<b \leq 3}\rho_\sigma^{ab}|J_{ab}^\sigma|. \label{e.j.4}
\end{align}
This completes the proof.
\end{proof}

Area of a surface is invariant under spatial rotation, but it is not invariant under boost. To construct an unitary representation of the Lorentz group, we need to consider imaginary time-axis. Instead of using $\rho_\sigma^{ab}$ as defined in Equation (\ref{e.ri.1}) for some parametrization $\sigma$ for $S$, we will replace the time component $\sigma_0$ with imaginary time $i \sigma_0$.

\begin{defn}\label{d.r.2}
Let $\sigma: [0,1]^2 \equiv I^2 \rightarrow \bR^4$ be a parametrization of a surface $S \subset \bR^4$.
For $a,b=0,1,2,3$ and $a < b$, define Jacobian matrices,
\begin{align}
\acute{J}_{ab}^\sigma(s,t) =
\left\{
  \begin{array}{ll}
    \ \ \ \ J_{ab}^\sigma(s,t), & \hbox{$a \neq 0$;} \\
    \left(
               \begin{array}{cc}
                 i\sigma_a'(s,t) & i\dot{\sigma}_a(s,t) \\
                 \sigma_b'(s,t) & \dot{\sigma}_b(s,t) \\
               \end{array}
             \right), & \hbox{$a = 0$.}
  \end{array}
\right.
\nonumber
\end{align}

For $a,b, c, d$ all distinct, define $\acute{\rho}_\sigma^{ab}: S \rightarrow \bC$ by
\begin{align*}
\acute{\rho}_\sigma^{ab} := \frac{\det \acute{J}_{ab}^\sigma}{\sqrt{\det\left[\acute{J}_{ab}^{\sigma,T}\acute{J}_{ab}^\sigma + \acute{J}_{cd}^{\sigma,T}\acute{J}_{cd}^\sigma\right]}},
\end{align*}
and
\begin{align*}
\int_S d\acute{\rho} &:= \sum_{0\leq a < b \leq 3}\int_{I^2}\acute{\rho}_\sigma^{ab}(\hat{s})[\det \acute{J}_{ab}^\sigma](\hat{s})\ d\hat{s}, \\
\int_S d|\acute{\rho}| &:= \int_{I^2}\left| \sum_{0\leq a < b \leq 3}\acute{\rho}_\sigma^{ab}(\hat{s})[\det \acute{J}_{ab}^\sigma](\hat{s})\right|\ d\hat{s}.
\end{align*}
\end{defn}

\begin{lem}\label{l.b.1}
Let $S$ be a compact time-like or space-like surface, contained in a plane. Then $\int_S d\acute{\rho}$ and $\int_S d|\acute{\rho}|$ remain invariant under any Lorentz transformation $\Lambda: S \mapsto \hat{S} = \Lambda S$, $\Lambda$ is a $4 \times 4$ Lorentz matrix.
\end{lem}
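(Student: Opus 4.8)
The plan is to read the imaginary-time substitution of Definition \ref{d.r.2} as a Wick rotation that simultaneously turns the Minkowski structure into a Euclidean one and turns a Lorentz matrix into a \emph{complex orthogonal} matrix; the statement then collapses onto the invariance already proved in Lemma \ref{l.i.1}. First I would record that the modified Jacobians are just ordinary Jacobians of a rotated parametrization: writing $E = \mathrm{diag}(i,1,1,1)$ and $\acute{\sigma} := E\sigma = (i\sigma_0, \sigma_1, \sigma_2, \sigma_3)$, Definition \ref{d.r.2} gives precisely $\acute{J}_{ab}^\sigma = J_{ab}^{\acute{\sigma}}$ for every $a<b$. Consequently $\det\acute{J}_{ab}^\sigma$ and the Gram determinant in the denominator of $\acute{\rho}_\sigma^{ab}$ are the very bilinear expressions of Lemma \ref{l.i.1}, now evaluated on the $\bC^4$-valued vectors $\acute{\sigma}', \dot{\acute{\sigma}}$ and with the real inner product replaced by the \emph{symmetric} complex bilinear form $\langle u,v\rangle_{\bC} = \sum_{a=0}^3 u_a v_a$ (no conjugation --- the definition uses the transpose, not the adjoint).

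Next, repeating the computation of Lemma \ref{l.i.1} verbatim over $\bC$, I would record the two identities
\begin{align*}
\det\acute{J}_{ab}^\sigma &= \langle \acute{\sigma}'\wedge\dot{\acute{\sigma}},\, e_a\wedge e_b\rangle_{\bC,2}, \\
\det\!\left[\acute{J}_{ab}^{\sigma,T}\acute{J}_{ab}^\sigma + \acute{J}_{cd}^{\sigma,T}\acute{J}_{cd}^\sigma\right] &= \langle \acute{\sigma}'\wedge\dot{\acute{\sigma}},\, \acute{\sigma}'\wedge\dot{\acute{\sigma}}\rangle_{\bC,2},
\end{align*}
where $\langle\cdot,\cdot\rangle_{\bC,2}$ is the bilinear extension of $\langle\cdot,\cdot\rangle_{\bC}$ to $\Lambda^2\bC^4$ and the second determinant is independent of the partition $\{a,b\},\{c,d\}$ exactly as in Equation (\ref{e.j.1}). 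Since $\{e_a\wedge e_b\}_{a<b}$ is an orthonormal basis for this form, summing yields
\begin{align*}
\acute{\rho}_\sigma := \sum_{0\le a<b\le 3}\acute{\rho}_\sigma^{ab}\,\det\acute{J}_{ab}^\sigma = \frac{\sum_{a<b}(\det\acute{J}_{ab}^\sigma)^2}{\sqrt{\langle \acute{\sigma}'\wedge\dot{\acute{\sigma}},\, \acute{\sigma}'\wedge\dot{\acute{\sigma}}\rangle_{\bC,2}}} = \sqrt{\langle \acute{\sigma}'\wedge\dot{\acute{\sigma}},\, \acute{\sigma}'\wedge\dot{\acute{\sigma}}\rangle_{\bC,2}}.
\end{align*}
A one-line check gives $\langle \acute{\sigma}', \acute{\sigma}'\rangle_{\bC} = -(\sigma_0')^2 + \sum_{i=1}^3(\sigma_i')^2$, the Minkowski square, so the radicand is the Minkowski Gram determinant of $\sigma',\dot{\sigma}$, which is a real number (positive in the space-like, negative in the time-like case) since the surface lies in a plane and hence $\sigma',\dot{\sigma}$ are constant.

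Finally I would establish the conjugation identity that does all the work. With $\eta = \mathrm{diag}(-1,1,1,1)$ one has $E^2 = \eta$, and the $4\times4$ Lorentz matrix $\Lambda$ satisfies $\Lambda^T\eta\Lambda = \eta$. Since $\Lambda S$ is parametrized by $\Lambda\sigma$, the rotated parametrization transforms as $\acute{(\Lambda\sigma)} = E\Lambda\sigma = (E\Lambda E^{-1})\acute{\sigma} =: \acute{\Lambda}\,\acute{\sigma}$, and
\begin{align*}
\acute{\Lambda}^T\acute{\Lambda} = E^{-1}\Lambda^T E^2\Lambda E^{-1} = E^{-1}(\Lambda^T\eta\Lambda)E^{-1} = E^{-1}\eta E^{-1} = I,
\end{align*}
so $\acute{\Lambda}$ is complex orthogonal. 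Exactly as in the final display of Lemma \ref{l.i.1}, the induced map $e_a\wedge e_b \mapsto \acute{\Lambda}e_a\wedge\acute{\Lambda}e_b$ then preserves $\langle\cdot,\cdot\rangle_{\bC,2}$, so $\langle \acute{\sigma}'\wedge\dot{\acute{\sigma}},\, \acute{\sigma}'\wedge\dot{\acute{\sigma}}\rangle_{\bC,2}$ is unchanged and $\acute{\rho}_{\Lambda\sigma} = \acute{\rho}_\sigma$ pointwise on $I^2$ (this is the pointwise identity invoked as Equation (\ref{e.j.6}) in the proof of Theorem \ref{t.b.2}). As $\Lambda\sigma$ parametrizes $\Lambda S$ over the same domain $I^2$, integrating an invariant integrand immediately gives invariance of both $\int_S d\acute{\rho}$ and $\int_S d|\acute{\rho}|$.

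The main obstacle I anticipate is bookkeeping around the complex square root rather than anything structural: one must confirm that replacing the Euclidean product by the symmetric bilinear form $\langle\cdot,\cdot\rangle_{\bC}$ (transpose, not conjugate) is the correct reading of Definition \ref{d.r.2}, and that the branch of $\sqrt{\cdot}$ is immaterial because on a planar surface the radicand is a single real constant whose sign is fixed by the causal type --- positive for a space-like surface, giving a real area, and negative for a time-like surface, giving a purely imaginary $\acute{\rho}_\sigma$ whose modulus $|\acute{\rho}_\sigma|$ is nonetheless Lorentz-invariant.
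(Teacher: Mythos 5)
Your proposal is correct and takes essentially the same route as the paper: both arguments rerun the computation of Lemma \ref{l.i.1} with imaginary time to reduce $\sum_{0\leq a<b\leq 3}\acute{\rho}_\sigma^{ab}\det\acute{J}_{ab}^\sigma$ to $\sqrt{[\sigma'\cdot\sigma'][\dot{\sigma}\cdot\dot{\sigma}]-[\sigma'\cdot\dot{\sigma}]^2}$ (the Minkowski Gram determinant), and then conclude by Lorentz invariance of the Minkowski product — your conjugation $E\Lambda E^{-1}$ being complex orthogonal is exactly that invariance in different clothing. The bilinear wedge identities and branch discussion you supply are precisely the details the paper compresses into its citation of Equation (\ref{e.j.4}) and the display (\ref{e.j.6}).
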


\begin{proof}
Let $\sigma: I^2 \rightarrow S$ be a parametrization of $S$. Then $\hat{\sigma} = \Lambda \sigma$ is a parametrization of $\hat{S} = \Lambda S$. Since $\vec{x} \cdot \vec{y}$ is invariant under Lorentz transformation, we have
from Equation (\ref{e.j.4}),
\begin{align}
\sum_{0\leq a < b \leq 3}\acute{\rho}_{\hat{\sigma}}^{ab}[\det \acute{J}_{ab}^{\hat{\sigma}}] &= \sqrt{[\hat{\sigma}'\cdot \hat{\sigma}'] [ \dot{\hat{\sigma}}\cdot \dot{\hat{\sigma}} ] - [ \hat{\sigma}'\cdot \dot{\hat{\sigma}} ]^2} \nonumber \\
&= \sqrt{[\sigma' \cdot \sigma'][\dot{\sigma} \cdot \dot{\sigma}] - [\sigma' \cdot \dot{\sigma}]^2} \nonumber \\
&= \sum_{0\leq a < b \leq 3}\acute{\rho}_{\sigma}^{ab}[\det \acute{J}_{ab}^{\sigma}]. \label{e.j.6}
\end{align}
This shows that $\int_{\hat{S}} d\acute{\rho}= \int_{S} d\acute{\rho}$ and $\int_{\hat{S}} d|\acute{\rho}|= \int_{S} d|\acute{\rho}|$.
\end{proof}

\begin{rem}
\begin{enumerate}
  \item When $S$ is a surface in spatial $\bR^3$, we see that $\int_S d|\acute{\rho}| = \int_S d\rho$, which is the area of the surface $S$.
  \item Note that $\int_S d\acute{\rho}$ can be complex valued.
\end{enumerate}
\end{rem}

\end{document}